\newcommand{\norm}[1]{\Vert#1\Vert}
\newcommand{\universe}{\mathbb{R}_{\geq 0}^{d}}
\DeclareMathOperator{\bindist}{Bin}
\DeclareMathOperator{\poidist}{Poi}
\DeclareMathOperator{\unifdist}{Unif}
\newcommand{\ip}[2]{\langle{#1},{#2}\rangle}
\newtheorem{theorem}{Theorem}
\newtheorem{lemma}[theorem]{Lemma}
\title{DartMinHash: Fast Sketching for Weighted Sets}
\author{%
  Tobias Christiani \\
  Department of Computer Science \\
  Norwegian University of Science and Technology \\
  Trondheim, Norway \\
  \texttt{tobias.christiani@ntnu.no}
}
\begin{document}

\maketitle

\begin{abstract}
  Weighted minwise hashing is a standard dimensionality reduction technique with applications to similarity search and large-scale kernel machines.
  We introduce a simple algorithm that takes a weighted set $x \in \universe$ and computes $k$ independent minhashes in expected time $O(k \log k + \norm{x}_{0}\log(\norm{x}_1 + 1/\norm{x}_1))$, 
  improving upon the state-of-the-art BagMinHash algorithm (KDD \textquotesingle 18) and representing the fastest weighted minhash algorithm for sparse data. 
  Our experiments show running times that scale better with $k$ and $\norm{x}_0$ compared to ICWS (ICDM \textquotesingle 10) and BagMinhash, obtaining $10$x speedups in common use cases.
  Our approach also gives rise to a technique for computing fully independent locality-sensitive hash values for $(L, K)$-parameterized approximate near neighbor search under weighted Jaccard similarity in optimal expected time $O(LK + \norm{x}_0)$, improving on prior work even in the case of unweighted sets.
\end{abstract}

%-----------------------------------------------------------------------------
\section{Introduction}
%-----------------------------------------------------------------------------
Consider a weighted set $x \in [0, M]^d$ with elements $i \in \{1, \dots, d\}$ and weights $0 \leq x_i \leq M$.
A simple scheme for computing the weighted minhash $h(x)$ of $x$ is to throw darts uniformly at random into $[0, M]^d$ and setting $h(x)$ to the rank (index) of the first dart that hits $x$~\cite{charikar2002,shrivastava2016simple}.
Two weighted sets hash to the same value if and only if the first dart that hits either set lands in the intersection of the two sets.
The probability of collision (taken over the random sequence of darts) is therefore exactly equal to the weighted Jaccard similarity $J(x, y)$ between the two sets.
\begin{equation} \label{eq:minhash}
  \Pr[h(x) = h(y)] = J(x, y) = \frac{\sum_i \min(x_i, y_i)}{\sum_i \max(x_i, y_i)}
\end{equation}
Weighted minwise hashing is also known as consistent weighted sampling~\cite{manasse2010consistent} and a consistent weighted sample from a set $x$ simply corresponds to the location (element and weight) of the first dart hitting $x$.
For notation we use the standard $\ell_p$-norms with $\norm{x}_0$ denoting the number of elements of $x$ with non-zero weight and $\norm{x}_1$ denoting the sum of the weights of $x$.

The straightforward rejection sampling approach of throwing darts until one hits requires throwing an expected $dM/\norm{x}_1$ darts to compute a minhash value, making it slow for sparse data~\cite{shrivastava2016simple}.
The main technical contribution of this paper is an improved hashing-based algorithm for efficiently recovering the first $t$ darts that hit $x$ with expected running time $O(t + \norm{x}_0 \log(\norm{x}_1 + 1/\norm{x}_1))$ that is independent of the sparsity of $x$.
For sets of constant weight $\norm{x}_1 = \Theta(1)$ this matches the optimal running time $O(t + \norm{x}_0)$ of reading the $\norm{x}_0$ elements of the input and returning $t$ darts.

Having fast access to the first $t$ darts hitting a set is a sketching primitive that can be used to compute independent minhashes~\cite{dahlgaard2017fast,ertl2018bagminhash}, 
weighted bottom-$k$ and one-permutation minhashes~\cite{cohen2007summarizing, li2012one},
as well as faster locality-sensitive hash functions for nearest neighbor search~\cite{indyk1998}.
This paper focuses on the metric-driven approach to sketching, where the metric (or similarity) is given and we are interested in developing efficient sketching algorithms with theoretical guarantees. 
%-----------------------------------------------------------------------------
\subsection{Applications of weighted minwise hashing}
%-----------------------------------------------------------------------------
%The ability to transform an arbitrarily large weighted set $x$ into a single integer $h(x)$ with the property in \eqref{eq:minhash} is a fundamental tool for dimensionality reduction with applications to similarity search~\cite{broder1997syntactic, har-peled2012, li2011theory, shrivastava2014densifying} and large-scale kernel machines\cite{li2011hashing,li2017linearized}.
%
\paragraph{Similarity estimation.}
Perhaps the most direct application is the use of minhash values as sketches for estimating the Jaccard similarity between weighted sets~\cite{broder1997syntactic}. 
Using $k$ independent minhashes $h_1(x), \dots, h_k(x)$ to sketch a weighted set $x$, and applying the same hash functions $h_1(y), \dots, h_k(y)$ to form a sketch of $y$, 
we estimate the weighted Jaccard similarity between $x$ and $y$ from their sketches using the estimator $\hat{J}(x, y) = (1/k)\sum_j \mathds{1}[h_j(x) = h_j(y)]$ where $k\hat{J}(x, y) \sim \bindist(k, J(x, y))$.
Sketches can be further compressed by mapping each minhash randomly to a $b$-bit fingerprint~\cite{charikar2002, li2011theory}.
Using $1$-bit minhash we can pack $64$ independent minhash values into a single machine word with the collision probability of each bit being $(1 + J(x, y))/2$.
Typical sketch lengths lie in the ranges between $k = 64$ and $k = 1024$~\cite{li2011theory, mitzenmacher2014efficient, chakrabarti2015bayesian}. 
\paragraph{Similarity search.}
Weighted minwise hashing is an example of a family of Locality-Sensitive Hash (LSH) functions which makes it applicable in standard solutions to a large number of similarity search problems, including exact and approximate nearest neighbor search~\cite{indyk1998,har-peled2012, indyk2004, aumuller2019puffinn}.
We can use locality-sensitive hashing to preprocess a dataset in order to support fast nearest neighbor queries by placing data points into buckets according to their locality-sensitive hash values.
During queries we only consider the subset of data points that hash to the same buckets as the query, speeding up the search compared to a linear scan.
The query/preprocessing operation searches/stores each point in $L$ buckets according to the concatenation of $K$ independent locality-sensitive hash values.
Typically $L$ is between $20-500$ and $K$ is between $10-30$, making the computation of $LK$ independent weighted minwise hash values a bottleneck, 
and motivating a significant research effort to speed up various LSH schemes as well as reducing the number of independent LSH computations required to perform nearest neighbor queries~\cite{shrivastava2014densifying, andoni2015practical, dahlgaard2017fast, christiani2019fast, aumuller2019puffinn}.

\paragraph{Large-scale kernel machines.}
Weighted minwise hashing also been successfully applied to speed up large-scale kernel machines~\cite{li2011theory, li2012one, li2017linearized}.
Similarly to the application of random Fourier features to linearize shift-invariant kernels such as the standard RBF kernel~\cite{rahimi2007}, 
weighted minwise hashing can be used to linearize the kernel $J(x, y)$ through a unary mapping $u$ of $b$-bit minhash values such that $\ip{u(h(x))}{u(h(y))} \approx J(x, y)$.
This drastically speeds up learning on large datasets as it avoids expensive Gram matrix computations by allowing us to use a linear SVM directly on the transformed data.
In experiments on a large number of classification tasks the $J(x, y)$ kernel has shown competitive and in many cases superior accuracy compared with the RBF kernel~\cite{li2017linearized}.
Furthermore, this performance carries over when we linearize the kernel, using as few as $k = 256$ minhashes and comparing favorably with random Fourier features which in turn are more expensive to compute. 
Once a linear SVM has been trained, the classification of a new data point is a simple as computing an inner product, thus making the linearization/randomized embedding by minhashing the bottleneck, which further motivates the need for fast weighted minwise hashing algorithms. 
%-----------------------------------------------------------------------------
\subsection{Related work}
%-----------------------------------------------------------------------------
\paragraph{Consistent weighted sampling.}
Since the original minhash algorithm for discrete sets~\cite{broder1997syntactic}, there has been a line of work attempting to achieve the same performance for weighted sets 
culminating in the (improved) consistent weighted sampling algorithm with running time $O(\norm{x}_0)$ for computing a single minhash value~\cite{gollapudi2006exploiting, manasse2010consistent, ioffe2010improved}.
A key insigt behind these algorithms is for a given weighted element $x_i$ we only need to consider the location $v \leq x_i$ of the dart of minimal rank that has hit $x_i$, the so-called ``active index''.
We can then exploit the fact that all weights between $v$ and the next active index have the same minhash value, and that the distance between active indices follows a known distribution.  

\paragraph{Fast techniques for unweighted sets.}
Since many applications require computing several hundred minhash values, there has also been significant efforts to speed up both discrete and weighted minhash algorithms.
In the case of discrete sets (all weights are either zero or one) the technique of one-permutation hashing~\cite{li2012one} is able to achieve running time $O(k + \norm{x}_0)$, but the $k$ hash values are not independent and the scheme cannot easily deal with sets of different sizes~\cite{shrivastava2014densifying, li2019rerandomized}. 
The authors of~\cite{dahlgaard2017fast} introduced fast similarity sketching as an approach to computing $k$ minhashes satisfying strong Chernoff-style concentration bounds in expected time $O(k \log k + \norm{x}_0)$ in the discrete case, overcoming some of the drawbacks of one-permutation hashing.
The problem with one-permutation hashing stems from the fact that the standard approach to minhashing in the discrete case is to ``throw darts'' or sample elements from the universe \emph{without replacement} in order to create a permutation.
Thus when processing a discrete set with $\norm{x}_0 < k$ elements, we get fewer than $k$ darts to produce minhash values from, and these values have dependencies since we are sampling without replacement. 
The fast similarity sketching approach of Dahlgaard et al.~\cite{dahlgaard2017fast} overcomes this by switching to sampling with replacement as needed in order to produce $k$ minhash values, although they do not manage to show full independence.
Algorithms for the discrete case can be applied to weighted sets by discretizing the weights at the cost of a discretization error and increased running times, losing general applicability~\cite{gollapudi2006exploiting, haeupler2014consistent}.

\paragraph{Fast weighted minwise hashing.}
For the weighted case the simple rejection sampling approach described earlier has been shown to work well in practice~\cite{shrivastava2016simple}.
The drawbacks of using rejection sampling directly is that it requires relatively tight a priori bounds on the weights of each element and that its performance degrades with the sparsity of the data.
In \cite{li2019rerandomized} the authors consider a combination of one-permutation hashing and Consistent Weighted Sampling (CWS) by first hashing the elements of a weighted set $x$ into a number of bins and then applying CWS on each bin.
This heuristic approach is shown to work well in practice, but it requires that the weight of $x$ is close to uniformly distributed across elements and it lacks general guarantees.

The BagMinHash algorithm~\cite{ertl2018bagminhash} combines ideas from fast similarity sketching with the ``active indices'' idea from the consistent weighted sampling approach to construct a fast algorithm for fully independent weighted minwise hashing.
The BagMinHash algorithm produces $k$ minhashes from a weighted set by simultaneously for each weighted element $x_i$ performing a top-down binary search for the relevant active indices of $k$ independent sequences of darts.
By generating the active indices in increasing order of their rank and keeping track of the max rank of the hitting darts found in each of the $k$ sequences the search can be stopped early, thus avoiding generating $k$ active indices for each weighted element.
The BagMinHash paper lacks an explicit running time bound, but a bound of $O(f(k, \omega) + \norm{x}_0 \omega \log \omega)$ is given where $f$ is some function and $\omega$ denotes the bit-length of the (floating-point) representation of weights $x_i$. 
Experimentally, the BagMinHash algorithm is shown to be faster than ICWS for $\norm{x}_0 \geq 100$ and $k \geq 256$.
%-----------------------------------------------------------------------------
\subsection{Contributions}
%-----------------------------------------------------------------------------
%
\begin{wrapfigure}[19]{R}{0.4\textwidth}
  \centering
  \includegraphics[width=0.4\textwidth]{}
  \caption{Weight-rank subdivision with areas relevant for $(\varphi/\norm{x}_1, x_i)$ marked in solid} \label{fig:weight-rank}
\end{wrapfigure}
\paragraph{DartHash.}
We develop an efficient algorithm for recovering the first $t$ darts hitting a weighted by essentially turning the rejection sampling approach on its head.
Instead of throwing darts uniformly at random into $[0, M]^d$ and keeping the first $t$ that hit our set, 
we could preprocess the darts into a data structure supporting queries of the form: ``Among the first $r$ darts thrown, return those hitting $x_i$''.
One of the challenges with this approach is that it would require storing a large number of darts in order to support queries on sparse sets where we need to throw $t d M / \norm{x}_1$ darts in order to have $t$ darts hitting $x$ in expectation.
Another challenge is how to subdivide the darts into ranges of ranks and weights in order to support efficient queries while ensuring that the results for all sets are consistent with the same underlying sequence of darts.
In Section \ref{sec:darthash} we introduce the DartHash algorithm for solving this problem with expected running time $O(t + \norm{x}_0 \log(\norm{x}_1 + 1/\norm{x}_1))$.
The key observation we use is that asymptotically as we increase the size of the universe ($d$ and $M$) the number of darts (from an infinite sequence) with ranks in a certain range hitting an element in a certain range of weights is Poisson distributed and independent of the number of darts hitting other such areas.
This makes it possible to use random hashing to obtain the darts hitting relevant areas of the weight-rank space, and by using a particular pattern to subdivide the weight-rank space in order to support fast queries we obtain our result. 

\paragraph{Weighed minwise hashing.}
Once we have an efficient method of retrieving the first hitting darts we can easily create weighted minhash values.
As shown earlier, the rank of the first hitting dart is a valid minhash value, so if we had the first dart in each of $k$ independent sequences of darts we would have $k$ minhash values.
Since the darts returned by the DartHash algorithm are independent, creating $k$ independent sequences is as simple as using a random hash function to assign each dart to one of $k$ sequences.
Using the DartHash algorithm to create $k$ weighted minhashes we need to recover the first $t = O(k \log k)$ darts in expectation to ensure that there is at least one dart assigned to each of the $k$ sequences.
The resulting ``DartMinHash'' algorithm produces $k$ independent weighted minhash values in expected time $O(k \log k + \norm{x}_0 \log(\norm{x}_1 + 1/\norm{x}_1))$.
\begin{wrapfigure}[26]{R}{0.5\textwidth}
  \begin{algorithm}[H]
    \small
    \SetKwInOut{Params}{Parameters}
    \SetKwArray{Darts}{darts}
    \SetKw{And}{and}
    \DontPrintSemicolon
    $D \leftarrow \emptyset $ \; 
    \For{$i \in \{ j \mid x_j > 0 \}$} { 
      \For{$\nu \leftarrow 0$ \KwTo $\lfloor \log_2(1+tx_i) \rfloor$} {
        \For{$\rho \leftarrow 0$ \KwTo $\lfloor\log_2(1+\varphi/\norm{x}_1)\rfloor$} {
          $W \leftarrow (2^\nu - 1)/t$, $R \leftarrow 2^\rho - 1$ \;
          $\delta_{\nu} \leftarrow 2^{\nu}/t2^{\rho}$, $\delta_{\rho} \leftarrow 2^{\rho}/2^{\nu}$ \;
          \For{$w \leftarrow 0$ \KwTo $2^{\rho}-1$} {
            \lIf{$x_i < W + w  \delta_{\nu}$} {break}
            \For{$r \leftarrow 0$ \KwTo $2^{\nu}-1$} {
              \lIf{$\varphi/\norm{x}_1 < R + r  \delta_{\rho}$} {break}
              $j \leftarrow 0$, $X \sim \poidist(1)$\;
              \While{$j < X$}{
                $V, U \sim \unifdist[0, 1)$ \;
                $weight \leftarrow W + (w + V)  \delta_{\nu}$\;
                $rank \leftarrow R + (r + U)  \delta_{\rho}$\;
                $index \leftarrow (i, \nu, \rho, w, r, j)$\;
                \If{$weight \leq x_i$ {\normalfont and} $rank \leq \varphi/\norm{x}_1$} {
                  $D \leftarrow D \cup (index, rank)$ 
                }
                $j \leftarrow j + 1$\;
              }
            }
          }
        }
      }
    }
    \Return{$D$}
    \caption{$\textsc{DartHash}_{t}(x, \varphi)$} \label{alg:darthash}
  \end{algorithm}
\end{wrapfigure}
\paragraph{Fast similarity search.}
DartHash can also be used to create the $LK$ locality-sensitive hash values for the LSH solution to approximate near neighbor search in optimal expected time $O(LK + \norm{x}_0)$, both improving prior work for discrete sets~\cite{shrivastava2014densifying,dahlgaard2017fast} and extending it to the weighted case.
For approximate near neighbor search, the important property of the $K$ locality-sensitive hash values used in each of the $L$ tables is that the probability of $x$ and $y$ colliding equals $J(x, y)^K$ and is independent between tables.
Consider the first $K$ darts hitting weighted sets $x$ and $y$. The probability that they have the same first hitting dart is exactly equal to $J(x,y)$.
Contingent on the first dart colliding, the probability of the two sets sharing the same second dart is exactly $J(x, y)$ since we have essentially reset the process after the first collision, and so on.
We can therefore use the first $K$ darts in each of $L$ independent sequences as our hash values and because $K = \Theta(\log L)$ the complexity is the same as using DartHash to obtain the first $LK$ darts.
By further dividing our data points into weight classes and normalizing to $\norm{x}_1 = \Theta(1)$ we avoid the additional overhead of dealing with large or small weights.
Even if we only consider the special case of discrete sets, this is the first result that is able to match the guarantees of using fully independent minhashes in expected time $O(LK + \norm{x}_0)$.
Previously, both one-permutation hashing~\cite{li2012one,shrivastava2014densifying} and fast similarity sketching~\cite{dahlgaard2017fast} has been able to achieve the same complexity for discrete sets, 
but lacking important guarantees from using fully independent hash values that allow us to upper bound the probability of not finding $y$ when querying for $x$ by $(1 - J(x, y)^K)^L$,
instead having to resort to weaker variance-based bounds~\cite{dahlgaard2017fast,christiani2019fast}.
%-----------------------------------------------------------------------------
\section{DartHash} \label{sec:darthash}
%-----------------------------------------------------------------------------
The DartHash algorithm uses a structured subdivision of the range of weights of each element and the infinite sequence of darts in order to efficiently simulate the rejection-sampling approach to finding the first darts hitting a set.
The algorithm is initialized with a parameter $t$ that is used to control how the weights are subdivided/hashed.
In addition, a call to $\textsc{DartHash}_{t}(x, \varphi)$ takes an argument $\varphi > 0$ that controls the upper limit on the rank of darts to return, 
so that the algorithm returns $\varphi t$ darts in expectation.
The remainder of this section will prove our main Lemma: 
\begin{lemma} \label{lem:darthash}
  $\textsc{DartHash}_t(x, \varphi)$ returns the first $D \sim Poi(\varphi t)$ darts hitting $x$ in expected time $O(\varphi t +  \norm{x}_0 \log(\norm{x}_1/\varphi + \varphi/\norm{x}_1))$.
\end{lemma}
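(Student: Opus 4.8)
The plan is to prove the two assertions of the lemma --- that $\textsc{DartHash}_t(x,\varphi)$ returns exactly the first $D\sim\poidist(\varphi t)$ darts hitting $x$, and that it runs in the stated expected time --- by analyzing the weight--rank subdivision that Algorithm~\ref{alg:darthash} implicitly uses.

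\emph{Correctness.} First I would record the subdivision. For a non-zero coordinate $i$ and integers $\nu,\rho\ge 0$ the algorithm works with the rectangle $I_\nu\times J_\rho$, where $I_\nu=[(2^\nu-1)/t,(2^{\nu+1}-1)/t)$ and $J_\rho=[2^\rho-1,2^{\rho+1}-1)$, cut into a $2^\rho\times 2^\nu$ grid of congruent cells of weight-width $\delta_\nu$, rank-height $\delta_\rho$, hence area $\delta_\nu\delta_\rho=1/t$. As $\{I_\nu\}_{\nu\ge0}$ and $\{J_\rho\}_{\rho\ge0}$ each partition $[0,\infty)$, these cells tile the weight--rank plane of coordinate $i$. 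Taking $X\sim\poidist(1)$ and $(V,U)\sim\unifdist[0,1)^2$ to be obtained by evaluating a fully random hash on the cell index $(i,\nu,\rho,w,r)$, resp.\ the dart index $(i,\nu,\rho,w,r,j)$, each cell independently receives $\poidist(1)$ points placed uniformly in it; by the standard fact that an intensity-$t$ Poisson process is obtained by placing independent $\poidist(t\cdot\mathrm{area})$ uniform points in the parts of any locally finite partition, this realizes for each $i$ an intensity-$t$ Poisson process on the plane, the processes being independent across $i$ and not depending on $x$ or $\varphi$ (so the output is consistent across calls). I would then verify that the loop ranges $\nu\le\lfloor\log_2(1+tx_i)\rfloor$, $\rho\le\lfloor\log_2(1+\varphi/\norm{x}_1)\rfloor$ and the two \texttt{break} statements together enumerate exactly the cells meeting $[0,x_i]\times[0,\varphi/\norm{x}_1]$, and that the final \texttt{if} retains precisely the points of the process lying in that box. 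Hence $D$ is the set of darts hitting $x$ of rank at most $\varphi/\norm{x}_1$ --- a rank-ordered prefix of the darts hitting $x$ --- and $|D|=\sum_i\poidist(tx_i\varphi/\norm{x}_1)=\poidist(\varphi t)$ by independence, superposition, and $\sum_i x_i=\norm{x}_1$.

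\emph{Running time.} A visited cell costs $O(1)$ plus one $\poidist(1)$ draw plus a \texttt{while}-loop of that many $O(1)$ iterations; every visited rectangle visits at least one cell; and the insertions into $D$ cost $O(\E|D|)=O(\varphi t)$. So the expected time is $O(\varphi t)$ plus $O(1)$ times the deterministic number $N$ of visited cells, and it suffices to bound $N$. Write $\lambda=\varphi/\norm{x}_1$, $\nu_{\max,i}=\lfloor\log_2(1+tx_i)\rfloor$, $\rho_{\max}=\lfloor\log_2(1+\lambda)\rfloor$, and classify each visited cell by whether its weight-interval lies below $x_i$ or straddles $x_i$, and whether its rank-interval lies below $\lambda$ or straddles $\lambda$. (i)~\emph{Interior} cells (below/below) are pairwise disjoint, of area $1/t$, and contained in $\bigcup_i[0,x_i]\times[0,\lambda]$ of total area $\varphi$, so number at most $t\varphi$. (ii)~At most one \emph{corner} cell per coordinate, so $\norm{x}_0$ total. (iii)~A \emph{weight-straddling} cell forces $x_i\in I_\nu$, i.e.\ $\nu=\nu_{\max,i}$; in rectangle $(\nu_{\max,i},\rho)$ the straddling column, trimmed by the rank-\texttt{break}, contributes $2^{\nu_{\max,i}}$ cells when $\rho<\rho_{\max}$ and at most $\lambda\,2^{\nu_{\max,i}}/2^{\rho_{\max}}+1$ when $\rho=\rho_{\max}$; summing over $\rho$ and over $i$, with $\sum_i 2^{\nu_{\max,i}}\le\norm{x}_0+t\norm{x}_1$ and $\lambda/2^{\rho_{\max}}<2$, gives $O(\rho_{\max}(\norm{x}_0+t\norm{x}_1)+\norm{x}_0)$. (iv)~Symmetrically a \emph{rank-straddling} cell forces $\rho=\rho_{\max}$; counting visited columns in the rectangles $(\nu,\rho_{\max})$, $\nu\le\nu_{\max,i}$, gives $O(2^{\rho_{\max}}(\norm{x}_0+\sum_i\nu_{\max,i}))$, and by concavity of $z\mapsto\log_2(1+tz)$ (Jensen) $\sum_i\nu_{\max,i}\le\norm{x}_0\log_2(1+t\norm{x}_1/\norm{x}_0)$.

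The final, and I expect hardest, step is to check that (i)--(iv) are all $O(\varphi t+\norm{x}_0\log(\norm{x}_1/\varphi+\varphi/\norm{x}_1))$; the boundary-cell accounting in (iii)--(iv) is the delicate part, the rest being mechanical. The elementary inequalities used are $\log_2(1+u)\le u/\ln 2$ for $u\ge0$ (so $\rho_{\max}t\norm{x}_1\le\log_2(1+\lambda)\,t\varphi/\lambda=O(t\varphi)$), $\log_2(1+\lambda)\le 1+\log_2(\lambda+1/\lambda)$ (so $\rho_{\max}\norm{x}_0=O(\norm{x}_0\log(\lambda+1/\lambda))$), and, with $z=t\norm{x}_1/\norm{x}_0$, the bounds $\log_2(1+z)\le 1+\log_2 z$ and $\log_2 z\le z$ for $z\ge1$; a short split on whether $\lambda\le1$ and whether $t\varphi\le\norm{x}_0$ then collapses the term $2^{\rho_{\max}}\norm{x}_0\log_2(1+z)$ from (iv) (and the rest) into the claimed bound. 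I would stress that the sharper per-column cap $\lambda\,2^{\nu_{\max,i}}/2^{\rho_{\max}}+1$ in the $\rho=\rho_{\max}$ rectangle of (iii) is essential: with the crude bound $2^{\nu_{\max,i}}$ a stray $t\norm{x}_1$ term survives, which is not $O(t\varphi)$ when $\varphi\ll\norm{x}_1$ --- more generally, it is the $\rho$-dependent cell aspect ratio $\delta_\nu\times\delta_\rho$ inside $I_\nu\times J_\rho$ that holds the straddling-cell totals to $O(\varphi t+\norm{x}_0\log(\cdot))$ rather than a quantity polynomial in $\norm{x}_1/\varphi$ or $\varphi/\norm{x}_1$.
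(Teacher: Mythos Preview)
Your decomposition is genuinely different from the paper's. The paper does not classify cells globally; instead it first proves a per-element bound (its Lemma~\ref{lem:element_bound}): for each nonzero coordinate~$i$ the number of areas visited is $O(tx_i\lambda+\log(1+tx_i)+\log(1+\lambda))$, obtained by a four-way case split on whether $A=\lfloor\log_2(1+tx_i)\rfloor$ and $B=\lfloor\log_2(1+\lambda)\rfloor$ are zero or positive. It then sums over $i$, applies the same concavity bound you use for $\sum_i\log(1+tx_i)$, and finishes with a short case analysis to absorb $\norm{x}_0\log(1+t\norm{x}_1/\norm{x}_0)$ into $O(\varphi t+\norm{x}_0\log(1+\norm{x}_1/\varphi))$. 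Your interior/boundary classification is a perfectly legitimate alternative, and it has the pleasant feature that the $O(\varphi t)$ contribution (your~(i)) falls out directly from an area argument rather than being threaded through cases.

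There is, however, a real gap in your step~(iv). You cap the number of visited columns in every rectangle $(\nu,\rho_{\max})$ by the crude $2^{\rho_{\max}}$, including the boundary rectangle $\nu=\nu_{\max,i}$; summing gives the stated $O\bigl(2^{\rho_{\max}}(\norm{x}_0+\sum_i\nu_{\max,i})\bigr)$. The additive $2^{\rho_{\max}}\norm{x}_0$ piece is not absorbed by the target: take $\lambda\gg 1$ together with $t\norm{x}_1\ll\norm{x}_0$ (so every $\nu_{\max,i}=0$, $z<1$, and $\log_2(1+z)$ is tiny), and you are left with $\Theta(\lambda\norm{x}_0)$ while the target is only $O(\varphi t+\norm{x}_0\log\lambda)$. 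Your final ``collapse'' paragraph treats only the term $2^{\rho_{\max}}\norm{x}_0\log_2(1+z)$ and silently drops the raw $2^{\rho_{\max}}\norm{x}_0$. The fix is precisely the symmetric sharpening you already stress for~(iii): in the rectangle $(\nu_{\max,i},\rho_{\max})$ the number of visited columns is at most $tx_i\,2^{\rho_{\max}}/2^{\nu_{\max,i}}+1$; since $tx_i/2^{\nu_{\max,i}}\le\min(2,tx_i)$, this sums over $i$ to at most $2^{\rho_{\max}}\min(2\norm{x}_0,\,t\norm{x}_1)+\norm{x}_0=O(\varphi t+\norm{x}_0)$, and the remaining $2^{\rho_{\max}}\sum_i\nu_{\max,i}$ is handled as you outline. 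With this correction your route goes through.
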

\paragraph{Poisson darts.}
Consider an infinite sequence of darts thrown uniformly at random into $[0, M]^d$ in the sense that the location of each dart is determined by sampling a uniform element from $\{1, \dots, d\}$ and a uniform weight from $[0, M]$.
By a standard argument the number of darts among the first $dM/\norm{x}_1$ that hit a weighted set $x$ follows a Poisson distribution in the limit as $dM$ goes to infinity~\cite{mitzenmacher2017probability}.
We are interested in recovering the first $\varphi t$ darts hitting $x$, so we need to consider darts with rank at most $\varphi tdM/\norm{x}_1$. 
To simplify the exposition we will henceforth refer use the term rank to refer to the normalized rank where we have divided through by $tdM$, 
i.e. we have an expected $\varphi t$ darts of rank at most $\varphi / \norm{x}_1$ hitting $x$.

Poisson distributed random variables can be split and combined~\cite{mitzenmacher2017probability}.
This allows us to subdivide the number of darts hitting an element $x_i$ into areas spanning different ranges of weights and ranks.
Figure~\ref{fig:weight-rank} shows the doubling-pattern employed in the DartHash algorithm which has been chosen specifically to minimize the running time for sets of unit weight. 
The weight-rank space for each element is subdivided into a number of \emph{regions} each containing a number of \emph{areas}. 
Region $(\nu, \rho)$ covers the portion of space given by $[(2^\nu - 1)/t, (2^{\nu+1} - 1)/t) \times [2^\rho - 1, 2^{\rho + 1} - 1)$.
Each region $(\nu, \rho)$ is split evenly into $2^\rho$ vertical subdivisions and $2^\nu$ horizontal subdivisions, resulting in a total of $2^{\nu + \rho}$ areas within region $(\nu, \rho)$.

The DartHash algorithm (see Algorithm \ref{alg:darthash} for pseudocode) works by going through each nonzero element $x_i$ of $x$, and iterating through the relevant regions and areas of $x_i$ as shown in Figure~\ref{fig:weight-rank} to find all darts of rank at most $\varphi/\norm{x}_1$ hitting $x_i$.
To avoid explicitly storing all the darts we use random hash functions to simulate draws from the Poisson distribution, determining the number of darts in an area, as well as their exact weight and rank within the area.
For the sake of readability we have omitted the subscripts from the random variables in Algorithm~\ref{alg:darthash} when in fact the random variables are tied to the particular element and area, e.g. $X_{i, \nu, \rho, w, r}$, and consistent between different weighted sets. 
The expected running time of finding the darts hitting $x_i$ can be upper bounded by the number of areas inspected by the algorithm since each area only contains one dart in expectation.
\begin{lemma} \label{lem:element_bound}
  A single iteration of the outer loop of Algorithm \ref{alg:darthash} 
  has expected running time $O(t x_i \varphi / \norm{x}_1 + \log(1 + tx_i) + \log(1 + \varphi/\norm{x}_1))$.
\end{lemma}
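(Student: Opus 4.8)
The plan is to charge the running time of a single outer-loop iteration (processing one nonzero element $x_i$) to the number of weight--rank areas the algorithm inspects for $x_i$, and then to bound that number. Throughout write $a = tx_i$ and $b = \varphi/\norm{x}_1$, and set $N = \lfloor\log_2(1+a)\rfloor$, $M = \lfloor\log_2(1+b)\rfloor$, so the $\nu$-loop ranges over $\{0,\dots,N\}$ and the $\rho$-loop over $\{0,\dots,M\}$.

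First I would show the expected running time is $O(1 + \#\text{areas inspected})$. Setting up a region $(\nu,\rho)$ costs $O(1)$; processing one area costs $O(1)$ plus $O(1)$ per dart it produces; and the number of darts produced in area $(i,\nu,\rho,w,r)$ is $X_{i,\nu,\rho,w,r}\sim\poidist(1)$ with the $X$'s mutually independent, so the expected total number of darts equals the number of inspected areas. Since the $\nu$-loop bound is $N=\lfloor\log_2(1+a)\rfloor$, every region entered satisfies $2^\nu-1\le a$, i.e. $W=(2^\nu-1)/t\le x_i$, so the break test at $w=0$ fails; likewise $2^\rho-1\le b$ makes the break test at $r=0$ fail, so every entered region inspects at least the area $(i,\nu,\rho,0,0)$. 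Hence the number of regions is at most the number of inspected areas, and everything reduces to counting the latter (assuming the usual unit-cost RAM model for hashing and set insertion).

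Next I would count the inspected areas region by region. Put $\theta_\nu=(a+1)/2^\nu-1\ge0$ and $\theta_\rho=(b+1)/2^\rho-1\ge0$; a one-line computation gives $(x_i-W)/\delta_\nu = 2^\rho\theta_\nu$ and $(\varphi/\norm{x}_1-R)/\delta_\rho = 2^\nu\theta_\rho$, so region $(\nu,\rho)$ contributes $\min(2^\rho,\lfloor 2^\rho\theta_\nu\rfloor+1)$ weight-subdivisions times $\min(2^\nu,\lfloor 2^\nu\theta_\rho\rfloor+1)$ rank-subdivisions (the $r$-break condition being independent of $w$). For $\nu<N$ one has $\theta_\nu\ge1$, so the first factor is exactly $2^\rho$; for $\nu=N$ one has $\theta_N<1$, so it is $\le 2^\rho\theta_N+1$ (symmetrically in $\rho,M$). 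I would then split the sum over regions by whether $\nu=N$ and/or $\rho=M$. The interior block is $\sum_{\nu<N,\rho<M}2^{\nu+\rho}=(2^N-1)(2^M-1)\le ab$. The block $\nu=N$, $\rho<M$ is at most $(a+1-2^N)(2^M-1)+2^N M$, where the first term is $\le ab$ because $a+1-2^N=2^N\theta_N\le a$ and $2^M-1\le b$, and $2^N M\le(1+a)\log_2(1+b)=O(ab+\log(1+b))$ (it vanishes when $b<1$ since then $M=0$, and $\log_2(1+b)=O(b)$ when $b\ge1$); the block $\nu<N$, $\rho=M$ is symmetric. The corner block $\nu=N$, $\rho=M$ is at most $(2^M\theta_N+1)(2^N\theta_M+1)=(a+1-2^N)(b+1-2^M)+2^M\theta_N+2^N\theta_M+1$, whose leading term is $\le ab$ and whose cross terms are $O(ab+\log(1+a)+\log(1+b))$ after a short case split on whether $a<1$ (if $a<1$ then $N=0$ and $\theta_N=a$, so $2^M\theta_N=2^Ma\le(1+b)a$; if $a\ge1$ then $\theta_N<1$ so $2^M\theta_N<2^M\le1+b$). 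Adding the four blocks gives $\#\text{areas inspected}=O(ab+\log(1+a)+\log(1+b))$, hence the claimed running-time bound (with an implicit additive $O(1)$, since at least one area is always inspected).

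The main obstacle is the bookkeeping in the area count: the crucial point is that the ``diagonal'' quantities $2^\nu\theta_\nu=a+1-2^\nu$ and $2^\rho\theta_\rho=b+1-2^\rho$ are at most $a$ and $b$ respectively, which is exactly what keeps every boundary and corner block at $O(ab)$ rather than degrading to $O((1+a)(1+b))$; one also has to be mildly careful that $b<1$ forces $M=0$ (so no $b$-versus-$\log(1+b)$ comparison is needed there) and that $\log_2(1+b)=O(b)$ once $b\ge1$, and to carry out the small-$a$/large-$a$ split for the corner cross terms. The reduction to counting areas, the per-area $\poidist(1)$ charge, and the unit-cost arithmetic and hashing are all routine.
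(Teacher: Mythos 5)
Your proof is correct and follows essentially the same approach as the paper: bound the expected running time by the number of inspected areas (each contributing $\poidist(1)$ expected darts) and count those areas using the doubling region structure together with the break conditions. The only difference is bookkeeping --- you do a finer region-by-region count split into interior/boundary/corner blocks, whereas the paper uses a coarser four-case split on whether $\lfloor\log_2(1+tx_i)\rfloor$ and $\lfloor\log_2(1+\varphi/\norm{x}_1)\rfloor$ are zero --- and both yield the same bound.
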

\begin{proof}
Let $A = \lfloor \log_2(1+tx_i) \rfloor$ and $B = \lfloor\log_2(1+\varphi/\norm{x}_1)\rfloor$.
The algorithm investigates areas in regions $(\nu, \rho) \in \{0,\dots, A\} \times \{0, \dots, B\}$. 
We proceed by considering the four cases: (i) $A, B > 0$, (ii) $A > 0, B = 0$, (iii) $A = 0, B > 0$, and (iv) $A, B = 0$.
In case (i) $A, B > 0$ we upper bound the number of visited areas by the total number of areas in the relevant regions 
$\sum_{\nu = 0}^{A} \sum_{\rho = 0}^{B} 2^{\nu + \rho} = O(2^{A + B})$.
In case (ii) $A > 0, B = 0$ we get the bound $\sum_{\nu = 0}^{A} \lceil 2^{\nu} (\varphi/\norm{x}_1) \rceil = O(2^{A} (\varphi/\norm{x}_1) + A)$.
Case (iii) $A = 0, B > 0$ results in a similar bound of $\sum_{\rho = 0}^{B} \lceil 2^{\rho} tx_i \rceil = O(2^{B} tx_i  + B)$.
The case (iv) $A, B = 0$ is trivially bounded by $1$ since the region $(0, 0)$ only contains a single area.
The number of areas in each case is thus bounded by the expression $O(\min(1, \varphi/\norm{x}_1) \cdot \min(1, tx_i) 2^{A+B} + A + B)$.
Lemma \ref{lem:element_bound} follows from using that $\min(1, \varphi/\norm{x}_1) 2^{A} \leq \min(1, \varphi/\norm{x}_1)(1 + \varphi/\norm{x}_1) \leq 2\varphi/\norm{x}_1$ 
and similarly that $\min(1, tx_i) 2^{B} \leq 2tx_i$.
\end{proof}
To arrive at Lemma \ref{lem:darthash} we first sum the expression in Lemma \ref{lem:element_bound} for the elements of $x$ to obtain the bound $O(\varphi t + \sum_i \log(1 + tx_i) + \norm{x}_0 \log(1 + \varphi/\norm{x}_1))$.
Next, we note that $\sum_i \log(1 + tx_i) \leq \norm{x}_0 \log(1 + t\norm{x}_1/\norm{x}_0)$ since the expression on the left is maximized when we spread the weight $\norm{x}_1$ of $x$ evenly across its $\norm{x}_0$ nonzero elements.
Finally, we can show that $\norm{x}_0 \log(1 + t\norm{x}_1/\norm{x}_0) = O(\varphi t + \norm{x}_0 \log(1 + \norm{x}_1/\varphi))$ and Lemma \ref{lem:darthash} follows.
This last bound is trivially true for $\norm{x}_0 \geq \varphi t$. 
Otherwise, for $\norm{x}_0 = \gamma \varphi t$ where $\gamma \in (0,1)$ and in the case where $\norm{x}_1 / \gamma \varphi \leq 1$ we have that $\gamma \varphi t \log(1 + \norm{x}_1 / \gamma \varphi) \leq \varphi t$.
For $\norm{x}_1 / \gamma \varphi > 1$ we have $\gamma \varphi t \log(1 + \norm{x}_1 / \gamma \varphi) \leq \gamma \varphi t (\log(\norm{x}_1 / \varphi) + \log(1/\gamma)) + \varphi t$ and we can use that $\gamma \log(1/\gamma) \leq 1/e$.
%-----------------------------------------------------------------------------
\section{Fast sketching}
%-----------------------------------------------------------------------------
In this section we will use the DartHash algorithm to produce independent weighted minhashes, bottom-$k$ weighted minhashes, and fast locality-sensitive hash values for nearest neighbor search.
We will make central use of the trick of randomly hashing the darts from the $\textsc{DartHash}_t(x, \varphi)$ algorithm to $k$ sequences, 
forming $k$ independent Poisson processes with rate $\varphi t/k$~\cite{mitzenmacher2017probability}.
Having $k$ independent sequences we can use the dart with smallest rank in each sequence to produce a minhash value. 
This approach of hashing darts to arrive at minhash values was used previously in~\cite{dahlgaard2017fast, ertl2018bagminhash}.
\begin{theorem} \label{thm:minhash}
  Let $k$ be a positive integer and assume access to a constant-time fully random hash function, 
  then there exists an algorithm that takes as input a weighted set $x \in \universe$ and outputs $k$ independent minhash values
  using expected time $O(k \log k + \norm{x}_0 \log(\norm{x}_1 + 1/\norm{x}_1))$.
\end{theorem}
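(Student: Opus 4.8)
The plan is to realize the $k$ minhashes on top of $\textsc{DartHash}$ exactly as outlined in Section~1.4. Fix once and for all the subdivision parameter $t = \Theta(k\log k)$ (say the least integer with $t\ge k\ln(2k)$, so that $k\,e^{-t/k}\le\tfrac12$) and a fully random hash function $g$ mapping dart indices to $\{1,\dots,k\}$. Given an input $x$ with $\norm{x}_1>0$ (the empty set gets a fixed sentinel vector), run $\textsc{DartHash}_t(x,\varphi)$ for $\varphi=1,2,4,\dots$, after each call sending the returned darts to their bins via $g$ and stopping the first time every one of the $k$ bins is non-empty; output, for each bin $j$, the index of the smallest-rank dart in bin $j$. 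The rounds can share work, since a call with larger $\varphi$ only exposes additional darts of strictly larger rank, but for simplicity I treat them as independent re-runs. By Lemma~\ref{lem:darthash} a call with parameter $\varphi$ returns exactly the darts hitting $x$ of normalized rank at most $\varphi/\norm{x}_1$ — equivalently the first $\poidist(\varphi t)$ darts of the infinite sequence that hit $x$, in rank order.

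For correctness I would argue two things. \emph{(i) Validity and independence of the output.} Once a bin becomes non-empty in some round, the smallest-rank dart it contains is already the true first dart of the entire infinite sequence that both hits $x$ and is labelled with that bin, because every dart not yet exposed has strictly larger rank; hence the output vector is a deterministic function of the underlying dart process and of $g$, and in particular does not depend on how many doubling rounds were needed. Labelling the darts by the independent uniform hash $g$ splits the Poisson process of darts hitting $x$ into $k$ independent Poisson sub-processes, so the $k$ first darts, and therefore the $k$ output indices, are independent. \emph{(ii) Collision probability.} Running a second set $y$ with the same $t$ and $g$, inside bin $j$ the first dart hitting $x$ coincides with the first dart hitting $y$ if and only if the first dart hitting $x\cup y$ in that bin lands in $x\cap y$; by the Poisson/uniform construction this has probability $\sum_i\min(x_i,y_i)/\sum_i\max(x_i,y_i)=J(x,y)$ as in~\eqref{eq:minhash}, and by the sub-process independence the $k$ collision events are mutually independent. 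Hence the algorithm outputs $k$ independent weighted minhashes.

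For the running time, let $J$ be the number of rounds. After the round with $\varphi=2^j$ the number of darts is $\poidist(2^jt)$ and each is an independent uniform bin, so $\Pr[\text{some bin empty}]\le k\,e^{-2^jt/k}\le k(2k)^{-2^j}$; thus $\Pr[J\ge j]$ decays doubly exponentially and $\E[2^J]=O(1)$, $\E[J^2]=O(1)$, uniformly over $x$. By Lemma~\ref{lem:darthash} the expected cost of round $j$ is $O\!\big(2^jt+\norm{x}_0\log(\norm{x}_1/2^j+2^j/\norm{x}_1)\big)$, and what makes these per-round costs compose cleanly with the random $J$ is that the set of areas a round inspects is a deterministic function of $x$, $t$ and $\varphi$ — only the Poisson counts inside the areas are random, and the extra randomness coupling a round's cost to the stopping time sits inside those areas and costs at most a constant factor in expectation. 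Summing over $j=0,\dots,J$: the geometric part contributes $O(2^Jt)$, with expectation $O(t)=O(k\log k)$; for the other part, $\log(\norm{x}_1/2^j+2^j/\norm{x}_1)\le j\log 2+\log(\norm{x}_1+1/\norm{x}_1)$ gives $\sum_{j=0}^{J}\log(\norm{x}_1/2^j+2^j/\norm{x}_1)\le (J+1)\log(\norm{x}_1+1/\norm{x}_1)+O(J^2)$, whose expectation is $O(\log(\norm{x}_1+1/\norm{x}_1))$ after using $\log(\norm{x}_1+1/\norm{x}_1)\ge\log 2$ to swallow the $O(J^2)$ term; the per-round $O(k)$ bookkeeping totals $O(k)$ in expectation. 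This yields the claimed $O(k\log k+\norm{x}_0\log(\norm{x}_1+1/\norm{x}_1))$.

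The step I expect to be the real obstacle is the exact-independence claim under the adaptive, data-dependent stopping rule: one must be certain that stopping at a random time neither tilts the marginal distribution of the output nor introduces dependence among the $k$ coordinates. The resolution sketched above is that the stopping rule only determines \emph{when} the first dart of each bin is discovered, never \emph{which} dart it is — each coordinate is pinned as the minimum of an independent Poisson sub-process regardless of the cutoff $\varphi$ — so full independence comes for free from Poisson splitting. A secondary, more routine matter is disentangling $J$ from the $\norm{x}_0$-dependent cost term, which goes through because the number of inspected areas in a round is fixed once $\varphi$ is fixed and $\E[2^J]=O(1)$.
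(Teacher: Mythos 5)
Your proposal is correct and follows essentially the same route as the paper: fix $t = \Theta(k\log k)$, call $\textsc{DartHash}_t(x,\varphi)$ with growing $\varphi$, Poisson-split the returned darts into $k$ bins, and output the minimum-rank dart per bin, bounding the expected cost via the doubly-exponential decay of the probability that some bin is still empty. The only cosmetic differences are that the paper increments $\varphi$ linearly rather than doubling it, and that you are more explicit than the paper about why the adaptive stopping rule preserves exact independence and about interchanging the random number of rounds with the per-round expected cost.
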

\begin{proof}
  To create $k$ minhashes we will proceed by evaluating $\textsc{DartHash}_t(x, \varphi)$ in steps $\varphi = 1,2,\dots$ and randomly hashing the resulting darts to $k$ buckets, stopping once all $k$ buckets have at least one dart.
  The $j$th minhash value is the dart with the smallest rank in bucket $j$.
  By fixing $t = \lceil k \ln k \rceil + k$ the number of darts in each bucket at step $\varphi$ is Poisson distributed with rate $\lambda \geq \varphi(1 +\ln k)$.
  We can union bound the probability that all $k$ buckets are empty after step $\varphi$ by $k e^{-\varphi(1 + \ln k)} \leq e^{-\varphi}$.
  The expected running time can be written as a sum over the probability of each step times its cost: $\sum_{\varphi = 1}^{\infty} e^{-(\varphi - 1)}O(\varphi k \log k +  \norm{x}_0 \log(\norm{x}_1/\varphi + \varphi/\norm{x}_1))$.
  For $\varphi \geq 1$ we can upper bound the cost of step $\varphi$ by $\varphi \cdot O(k \log k + \log(\norm{x}_1 + 1/\norm{x}_1))$. 
  Finally, we can use that $\sum_{\varphi = 1}^{\infty} \varphi e^{-(\varphi - 1)}  \leq e \sum_{\varphi = 1}^{\infty} \varphi 2^{-\varphi} = 2e$ and Theorem~\ref{thm:minhash} follows.  
\end{proof}
\paragraph{Bottom-$k$ minhash and fast approximate near neighbor search.}
The bottom-$k$ sketch consists of the first $k$ darts hitting a set.
Compared to $k$ independent minhashes it provides a more accurate estimate of the Jaccard similarity between sets~\cite{cohen2007summarizing}.
The \textsc{DartHash} algorithm can compute bottom-$k$ weighted minhashes in expected time $O(k + \norm{x}_0 \log(\norm{x}_1 + 1/\norm{x}_1))$ by setting $t = k$ and increasing $\varphi$ until at least $k$ darts are found.
These first $k$ darts can be returned in sorted order in expected time $O(k)$ using a bucket sort on their rank.

To speed up the standard locality-sensitive hashing solution to approximate nearest neighbor search under weighted Jaccard similarity we first split the data $P$ into weight classes e.g. $\norm{x}_1 \in [1, 2), [2, 4), [4, 8),..$.
Within each weight class we normalize data to have weight $\norm{x}_1 \in [1, 2)$. 
Given a query point $q$ we are interested in finding points $x \in P$ with $J(q, y) \geq j_1$ for some constant e.g. $j_1 = 0.5$, allowing us to restrict our search to a constant number of different weight classes where our query point will be normalized to $\norm{q}_1 = \Theta(1)$. 
In each weight class we can use DartHash with $t = LK$ and randomly hash the darts into $L$ buckets to create $L$ independent sorted bottom-$K$ minhash sketches with collision probability $J(q, x)^K$ that can be used as locality-sensitive hash values .
Since the standard LSH data structure has $K = \Theta(\log L)$ we can use Chernoff bounds to show that we only need to recover $O(LK)$ darts in expectation to fill the $L$ bottom-$K$ sketches.
%-----------------------------------------------------------------------------
\section{Experiments}
%-----------------------------------------------------------------------------
Our experiments will focus on investigating the correctness and performance of the DartMinHash algorithm compared to BagMinHash~\cite{ertl2018bagminhash} and ICWS~\cite{ioffe2010improved}.
To corroborate the claim that DartMinHash returns independent weighted minhash values we will compare similarity estimates for different sketch lenghts to their theoretical distribution under full independence.
To study the performance of different algorithms we measure running times on synthetic data when varying $k$, $\norm{x}_0, \norm{x}_1$. 

\paragraph{Implementation.}
For our experiments we have implemented the darthash algorithm and its related sketches in C\texttt{++}.\footnote{The source code is available on GitHub: \url{https://github.com/tobc/dartminhash}.}
The implementation closely follows the pseudocode in Algorithm~\ref{alg:darthash} with a few additional optimizations such as using precomputed tables for the powers of two and the Poisson CDF to avoid expensive powering and division operations.
To simulate random draws we use tabulation hashing~\cite{zobrist1970new} seeded using the Mersenne Twister PRNG~\cite{matsumoto1998mersenne}.
Tabulation hashing is fast to compute and has been shown to have strong pseudorandomness properties in a variety of applications~\cite{patrascu2012power, dahlgaard2017practical}.
% and it has been shown that $b$-bit minhash implemented using simple hash functions works well in practice~\cite{li2013bbit}.
\begin{wrapfigure}[21]{R}{0.45\columnwidth}
  \centering
  \includegraphics[width=0.45\columnwidth]{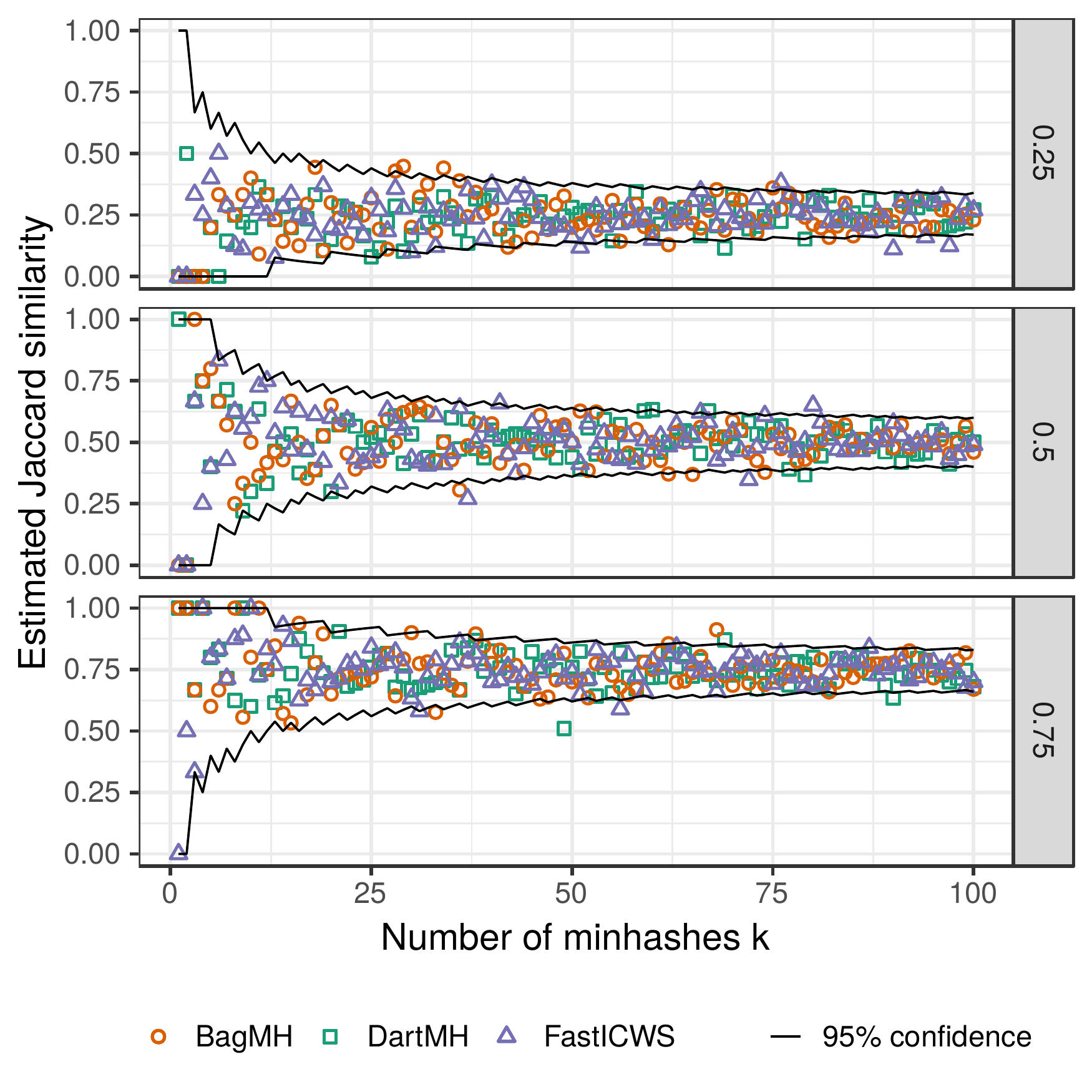}
  \caption{Estimation using $k = 1, \dots, 100$ minhashes from FastICWS, BagMinHash, and DartMinHash for Jaccard similarity $0.25$, $0.5$, and $0.75$}
  %\caption{Similarity estimates}
  \label{fig:estimation}
\end{wrapfigure}

To benchmark the BagMinHash algorithm we use the implementation in C\texttt{++} that was made available by the author on GitHub~\cite{ertl2018bagminhashrepo}.
The BagMinHash algorithm comes in two flavors, BagMinHash1 and BagMinHash2, that further rely on a choice of discretization (float or double) of the weights.
We chose to use BagMinHash2 with the float discretization as preliminary experiments has shown this to always be the faster choice.
The BagMinHash implementation uses XXHash64~\cite{collet2016xxhash} as its source of pseudorandomness.

For ICWS we first created a straightforward and unoptimized implementation of the pseudocode in the original paper~\cite{ioffe2010improved} using tabulation hashing as the source of randomness.
The BagMinHash repository contained a more optimized implementation using the ziggurat algorithm for efficient random sampling~\cite{marsaglia2000ziggurat} that typically ran $2$x faster than our own reference implementation.
However, this implementation still proved slower than the DartMinHash algorithm for $k = 1$, so we implemented a highly optimized version of ICWS that relies on tabulation of the Gamma(2,1) distribution and minimizes the use of logarithms, exponentials, and divisions.
Furthermore, we employ the simple optimization of only computing the logarithm of each weight $x_i$ once when producing $k$ sketches, compared to $k$ times for each weight in the standard ICWS algorithm. 
The FastICWS implementation is typically $3-10$x faster than the ICWS implementation from the BagMinHash repository as can be seen from Table~\ref{tab:performance}.

There is a line of work~\cite{wu2016canonical, wu2017consistent,wu2019improved} of modifications to the ICWS algorithm to provide (relatively minor) speedups in practice. 
It was shown in ~\cite{ertl2018bagminhash} that the correctness of these alternative ICWS algorithms is questionable, and in any case they do not change the asymptotic complexity $O(k \norm{x}_0)$ of the ICWS algorithm. 
We therefore choose to restrict our focus to the ICWS implementation from the BagMinHash repository and our own FastICWS implementations of the original ICWS algorithm.
We will verify the correctness of the latter algorithm experimentally on synthetic data.

Experiments ran on an Intel i7-10510U CPU with 8mb of cache and 16gb RAM under Ubuntu 18.04 LTS and were compiled using \texttt{gcc} version 7.5.0 with optimization flags \texttt{-O3} and \texttt{-march=native}. 

\begin{table}
  \centering
  \caption{Running times in ms when varying $k$, $\norm{x}_0$, and $\norm{x}_1$}
  \label{tab:performance}
  \begin{tabular}{rrr|rrrrr}
    \toprule
    $k$  & $\norm{x}_0$ & $\norm{x}_1$ & ICWS    & FastICWS      & BagMH & DartMH   \\ \midrule
    1    & 256          & 1            & 0.04    & \textbf{0.01} & 0.10       & 0.02          \\
    256  & 256          & 1            & 7.65    & 0.92          & 3.20       & \textbf{0.17} \\
    4096 & 256          & 1            & 120.92  & 18.89         & 89.66      & \textbf{2.67} \\
    1    & 4096         & 1            & 0.63    & \textbf{0.11} & 1.53       & 0.30          \\
    256  & 4096         & 1            & 121.26  & 8.68          & 6.59       & \textbf{0.49} \\
    4096 & 4096         & 1            & 1919.49 & 152.92        & 113.30     & \textbf{3.64} \\ \midrule
    64   & 64           & 1            & 0.48    & 0.05          & 0.63       & \textbf{0.04} \\
    64   & 1024         & 1            & 7.55    & 0.50          & 1.38       & \textbf{0.10} \\
    64   & 16384        & 1            & 138.89  & 8.75          & 10.45      & \textbf{1.37} \\
    1024 & 64           & 1            & 7.79    & 2.49          & 13.33      & \textbf{0.58} \\
    1024 & 1024         & 1            & 120.41  & 12.59         & 17.19      & \textbf{0.78} \\
    1024 & 16384        & 1            & 1926.13 & 131.38        & 34.69      & \textbf{2.06} \\ \midrule
    256  & 1024         & 1            & 30.19   & 2.68          & 4.25       & \textbf{0.26} \\
    256  & 1024         & $2^{64}$     & 30.19   & 2.64          & 4.22       & \textbf{2.47} \\
    256  & 1024         & $2^{-64}$    & 30.31   & 2.68          & 4.15       & \textbf{2.25} \\
    256  & 1024         & $2^{512}$    & 30.20   & \textbf{2.75} & (3.55)     & 18.58         \\
    256  & 1024         & $2^{-512}$   & 30.19   & \textbf{2.73} & (0.01)     & 17.10         \\ \bottomrule
  \end{tabular}
\end{table}
\paragraph{Synthetic data.}
Following the approach in~\cite{ertl2018bagminhash} we will use synthetic data for our experiments since it allows us to study in detail how the running time of the algorithms depend different values of $k$, $\norm{x}_0$ and $\norm{x}_1$. 
A weighted set will be represented by a list of pairs $(i, x_i)$ where $i$ is encoded as a $64$-bit integer and $x_i$ is a double precision floating point number. 
We generate a random weighted set $x$ with desired norms $\norm{x}_0$ and $\norm{x}_1$ by sampling a point $\tilde{x} \in [0, 1]^{\norm{x}_1}$
uniformly at random from the $(\norm{x}_0) - 1)$-dimensional probabilistic simplex~\cite{smith2004sampling}, 
scaling it to have norm $\norm{x}_1$, and assigning each scaled component of $\tilde{x}$ to a unique random $64$-bit element $i$.
In other words, our synthetic data points are scaled random discrete probability distributions with uniform random support over $\{0, \dots, 2^{64} - 1\}$ of size $\norm{x}_0$.
Given a point $x$ we generate a point $y$ with desired similarity $B(x, y) = b$ and $\norm{y}_1 = \norm{x}_1$ by setting $y_i = b x_i$ and $y_{j} = 1 - b$ for a random $j$ where $x_j = 0$.
Note that according to our analysis the DartHash algorithm has worst-case performance on sets with uniform weights, so our synthetic data should provide a good indication of worst-case performance.
%-----------------------------------------------------------------------------
\subsection{Estimation error}
%-----------------------------------------------------------------------------
Figure \ref{fig:estimation} shows how the Jaccard similarity estimates behave for Jaccard similarity $0.25$, $0.5$, and $0.75$ when varying the number of minhashes.
For each value of $k = 1, \dots, 100$ we generate a fresh pair of random points $(x, y)$ with target Jaccard similarity, e.g. $J(x, y) = 0.5$, 
initialize our sketching algorithms with new random numbers, generate sketches of length $k$, and estimate the Jaccard similarity from the sketches. 
From Figure \ref{fig:estimation} we can see that the three sketching algorithms behave roughly as we would expect according to the theoretical behaviour of independent minhash values: 
The estimation error decreases as we increase the sketch length and the estimates of all three algorithms stay within the confidence intervals about 95\% of the time.
 
%-----------------------------------------------------------------------------
\subsection{Running time comparison}
%-----------------------------------------------------------------------------
We compare the performance of the different algorithms by measuring their time to create $k$ minhashes from a weighted set with a given $L_0$- and $L_1$-norm. 
For every choice of $k$, $\norm{x}_0$, and $\norm{x}_1$ we generate 100 random sets and measure average running times.

We focus most of our experiments on sketch lengths $k$ that are powers of two between $64$ and $4096$ as these are most common in retrieval and machine learning applications~\cite{li2017linearized,mitzenmacher2014efficient,satuluri2012bayesian, li2012one}. 
For the choice of the $L_0$-norm of our synthetic data we keep it roughly in the same range as $k$ as dimensionality or number of nonzeroes between $10-10000$ is common in both dense and sparse data~\cite{aumuller2017annbenchmarks,li2017linearized, mann2016}.
We choose to conduct most of our experiments on synthetic data points with $\norm{x}_1 = 1$. 
We believe that this setting most accurately reflects performance on real-world problems where data is typically scaled/normalized.

Table \ref{tab:performance} gives an overview of the running times in different settings that are meant to demonstrate the strengths and weaknesses of each algorithm.
Figure \ref{fig:performance} shows the effect on the running time of varying each of $k$, $\norm{x}_0$, $\norm{x}_1$ in turn while keeping the others fixed.
Finally, Figure \ref{fig:speedup} shows the speedup that DartMinHash is able to achieve over the state-of-the-art as we vary $k$ and $\norm{x}_0$ for $\norm{x}_1 = 1$.

Overall, we see that the running time of each algorithm behaves as we would expect from its complexity bounds. 
Looking at Table \ref{tab:performance}, the DartMinHash algorithm is on average $10$x faster than FastICWS and $15$x faster than BagMinHash.
As expected, DartMinHash performs better relative to the other algorithms when $k$ and $\norm{x}_0$ are both large and $\norm{x}_1 = 1$. 
For instance, we see a $15$x speedup for $k = \norm{x}_0 = 1024$ and a $30$x speedup for $k = \norm{x}_0 = 4096$ compared to the runner-up.
\begin{wrapfigure}[17]{r}{0.5\textwidth}
  \centering
  \includegraphics[width=0.5\textwidth]{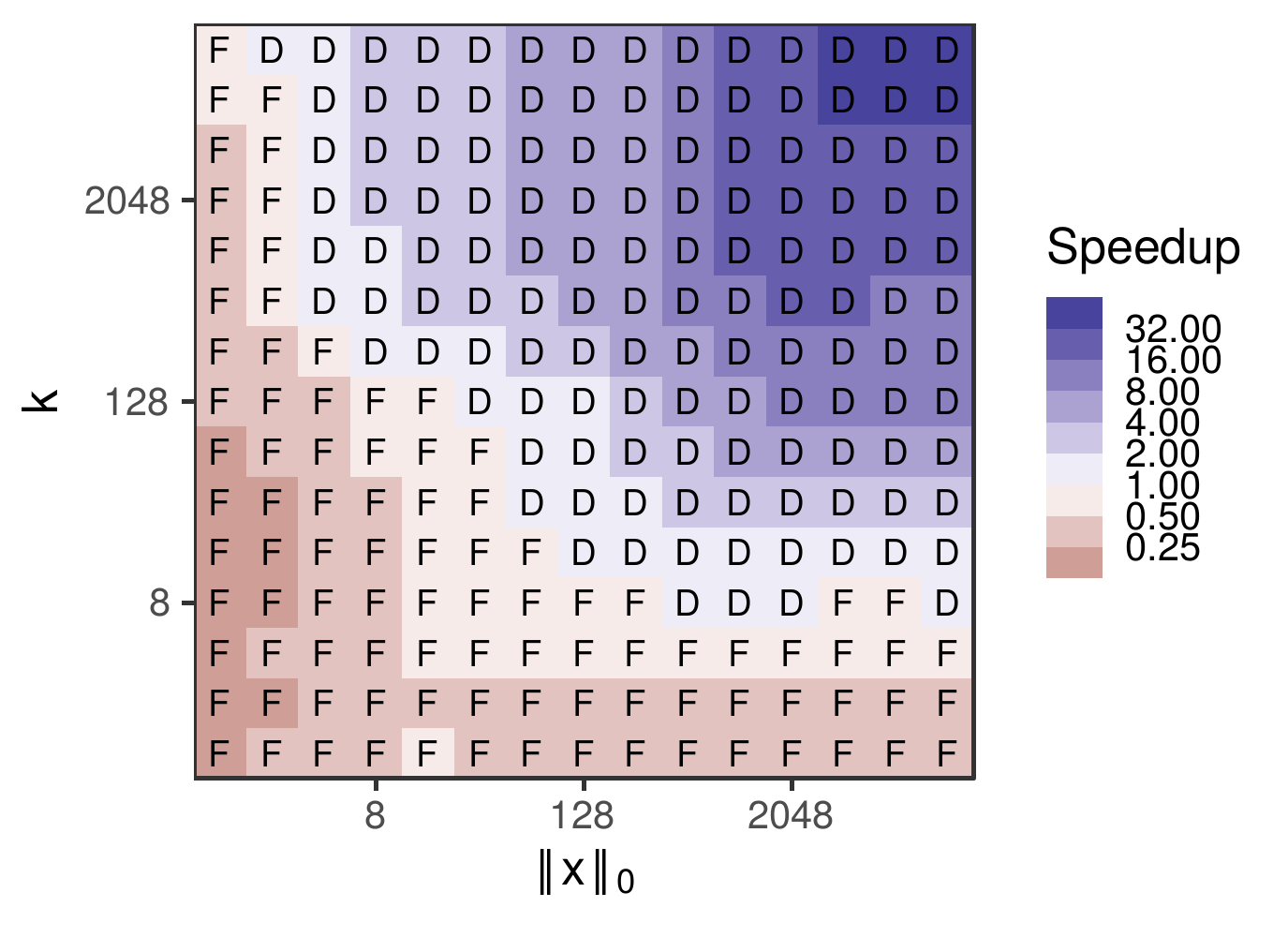}
  \caption{Speedup of DartMinHash relative to the faster of FastICWS and BagMinHash for $\norm{x}_1 = 1$}
  % \caption{Speedup of DartMinHash}
  \label{fig:speedup}
\end{wrapfigure}

Figure \ref{fig:speedup} shows the speedup of DartMinHash (D) over FastICWS (F) and BagMinHash (B) as we vary $k$ and $\norm{x}_0$ with the fastest algorithm indicated in each setting.
We see that DartMinHash is faster than the current state of the art for all values of $k \geq 32, \norm{x}_0 \geq 64$, making it the preferred algorithm in most settings.
The performance of DartMinHash is slower than FastICWS when $k = 1$ as we would expect, but only $2-4$x slower (except when $\norm{x}_0 = 1$).
\begin{figure}
  \centering
  \subfloat[$k = 1, 2, \dots, 65536$\label{fig:variation_k}]{{\includegraphics[width=0.33\columnwidth]{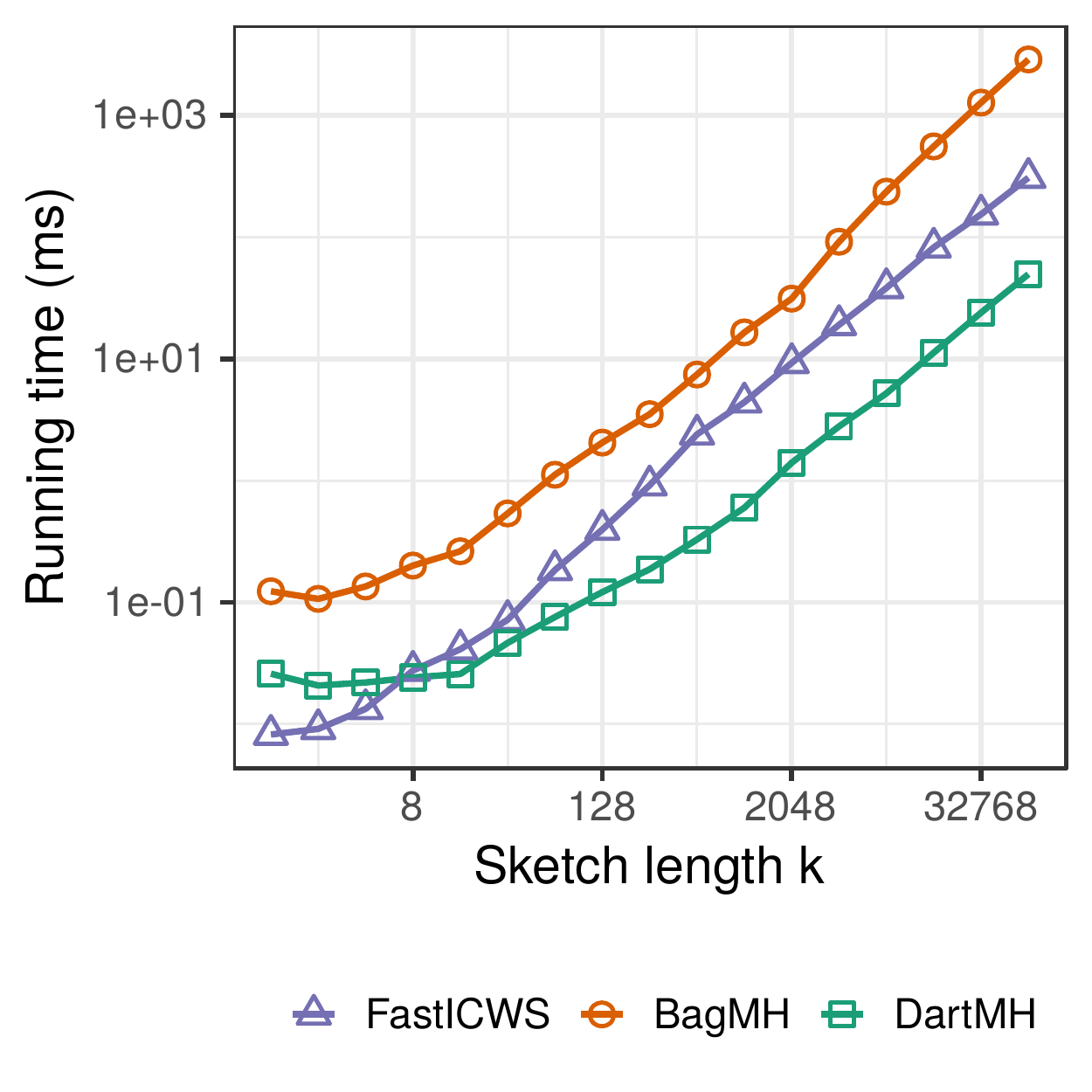}}}
  \hfill
  \subfloat[$\norm{x}_0 = 16, 32, \dots, 65536$\label{fig:variation_L0}]{{\includegraphics[width=0.33\columnwidth]{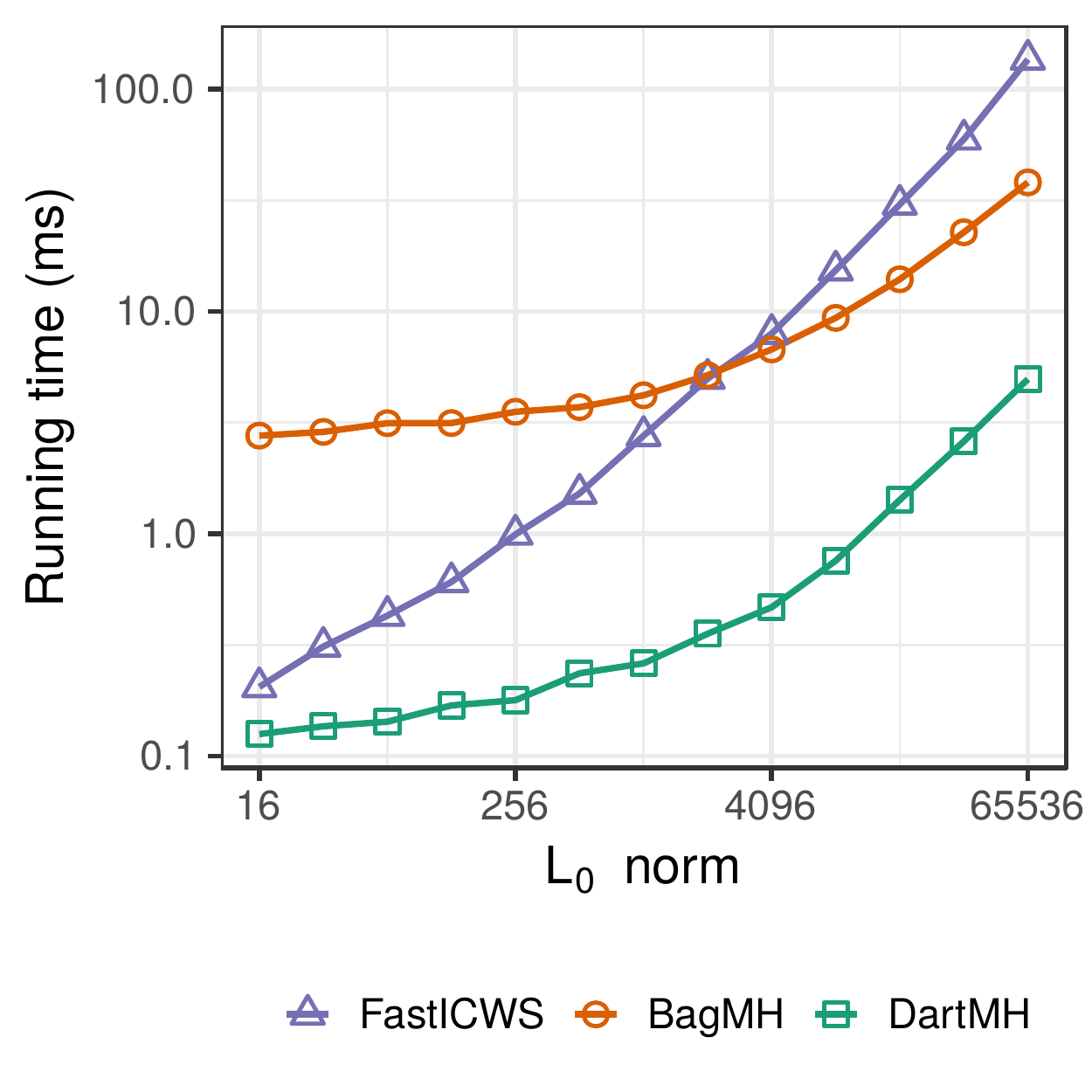}}}
  \hfill
  \subfloat[$\norm{x}_1 = 2^{-128}, 2^{-112}, \dots, 2^{128}$\label{fig:variation_L1}]{{\includegraphics[width=0.33\columnwidth]{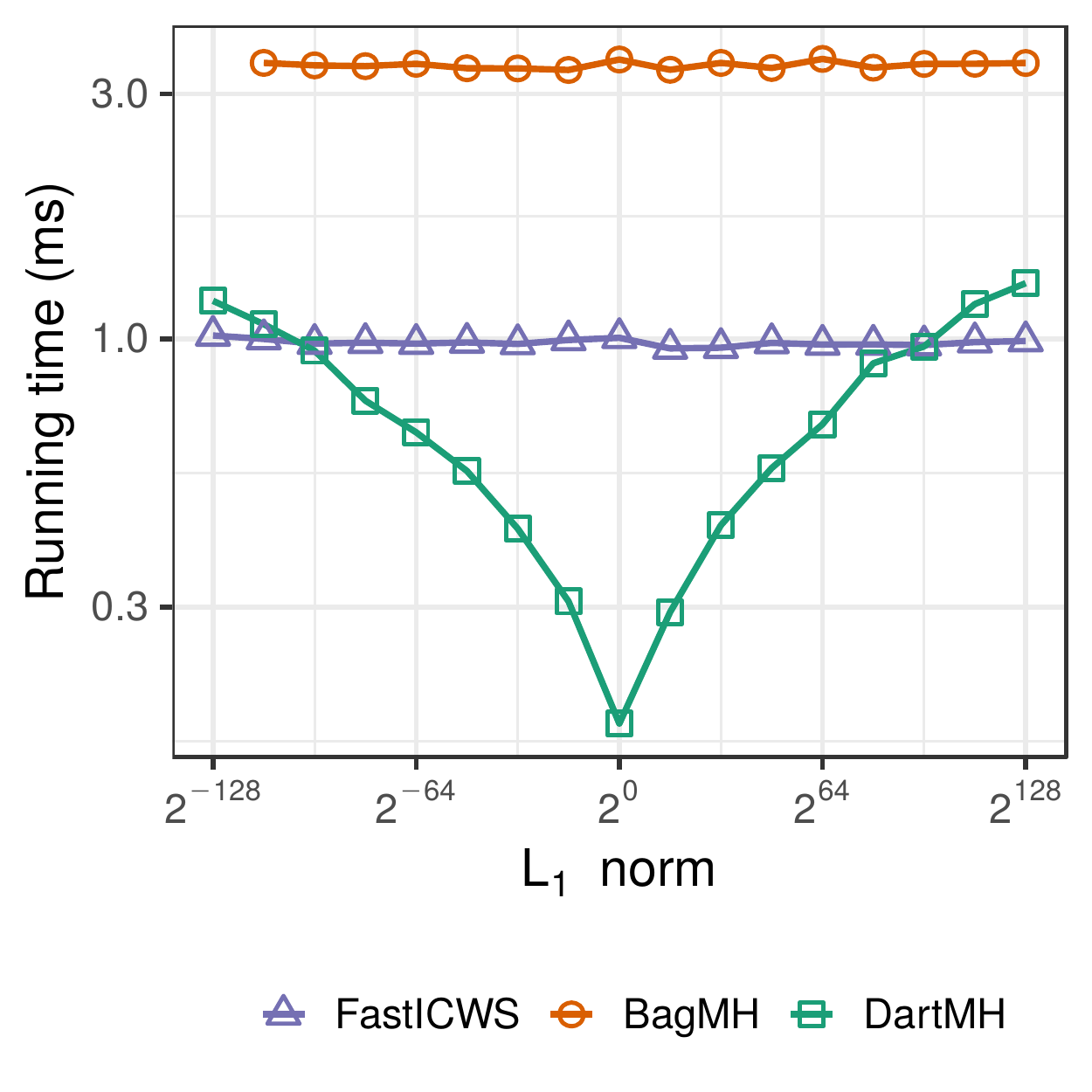}}}
  \caption{Running time when varying one of $k = 256$, $\norm{x}_0 = 256$, $\norm{x}_1 = 1$}
  \label{fig:performance}
\end{figure}

From the running time bound in Theorem~\ref{thm:minhash} we would expect the performance of DartMinHash to degrade for extreme values of $\norm{x}_1$.
From the last group of rows in Table \ref{fig:performance} we can indeed see that while DartMinHash achieves about a $10$x speedup for $k = 256, \norm{x}_0 = 1024$ in the case of $\norm{x}_1 = 1$,
when we set $\norm{x}_1 = 2^{64}, 2^{-64}$ then the performance of DartMinHash essentially matches that of FastICWS and is only about $2$x faster than BagMinHash.
At the extremes of $\norm{x}_1 = 2^{512}, 2^{-512}$ we see that FastICWS becomes about $6-7$x faster than DartMinHash. 
The running times for BagMinHash are invalid at this latter setting and have been parenthesized since we are using the floating point discretization and standard floating point does not support values outside $2^{\pm 128}$ whereas the limit for double precision is roughly $2^{\pm 1024}$. 
As expected we see that both FastICWS and BagMinHash is relatively unaffected by changes to $\norm{x}_1$ within the supported ranges.
Figure \ref{fig:variation_L1} further shows that DartMinHash retains its large speedup over FastICWS and BagMinHash for values of $\norm{x}_1$ that are not too extreme, say $2^{-32} \leq \norm{x}_1 \leq 2^{32}$.
%-----------------------------------------------------------------------------
\section*{Broader Impact}
%-----------------------------------------------------------------------------
The contribution in this paper consists of an algorithm that can be applied to speed up sketching and similarity estimation of weighted sets.
Algorithms for sketching sets have been widely known and been in use for over 20 years, so I mostly see this work as a potential efficiency improvement.
Positive impacts could be to reduce power consumption and storage requirements in systems performing similarity search or learning using large-scale kernel machines.
If similarity search is a fundamental subtask of artificial general intelligence, which I find likely, 
then this research can potentially be used to bring the day of an intelligence explosion closer, but the same can be said for many algorithmic improvements.  
%-----------------------------------------------------------------------------
\begin{ack}
I would like to thank Rasmus Pagh for his comments on an earlier version of the paper.

The research leading to these results has received funding from the European Research Council under the European Union’s 7th Framework Programme (FP7/2007-2013) / ERC grant agreement no. 614331. 
This research also received direct funding from the Norwegian Open AI Lab at the Norwegian University of Science and Technology. 
\end{ack}
%-----------------------------------------------------------------------------
\bibliographystyle{abbrv}
\bibliography{sketch}
%-----------------------------------------------------------------------------
\end{document}